\newcommand{\clocks}{\mathcal{X}} 
\newcommand{\true}{\texttt{true}} 
\newcommand{\N}{\mathbb{N}}
\newcommand{\RP}{\mathbb{R}_{\ge 0}}
\newcommand{\val}{\nu} 
\newcommand{\guard}{\varphi} 
\newcommand{\reset}{\lambda}
\newcommand{\zeroval}{\boldsymbol{0}} 
\newcommand{\A}{\mathcal{A}} 
\newcommand{\B}{\mathcal{B}} 
\newcommand{\locs}{L} 
\newcommand{\edges}{E}
\newcommand{\tuple}[1]{\langle #1 \rangle}
\newcommand{\loc}{\ell} 
\newcommand{\sto}[1]{\stackrel{#1}{\Longrightarrow}}
\newtheorem{theorem}{Theorem}
\newtheorem{proposition}[theorem]{Proposition}
\begin{document}
\title{Effective Definability of the Reachability Relation in Timed
  Automata} 

\author{Martin Fr\"anzle\\
University Of Oldenburg
\and Karin Quaas \\
Universit\"at Leipzig
\and
Mahsa Shirmohammadi\\
CNRS \& IRIF
\and
 James Worrell\\
University of Oxford}

\date{}
\maketitle
\begin{abstract}
We give a new proof of the result of Comon and Jurski that the binary reachability relation of a timed automaton 
is definable in linear arithmetic.
\end{abstract}

%
%

\section{Introduction}
Comon and Jurski~\cite{ComonJ99,ComonJ99-TR} showed that the binary
reachability relation of a given timed automaton is effectively
definable by a first-order formula of linear arithmetic over the reals
augmented with a unary predicate denoting the integers.  The proof of
this result, given in~\cite{ComonJ99-TR}, is based on a syntactic
transformation of arbitrary timed automata into equivalent timed
automata satisfying a certain structural restriction, called flatness.
The proof is relatively long and technical (running to over 40 pages)
and there have been a number of subsequent attempts to both generalise
the result and simplify its
proof~\cite{ClementeL18,Dang03,Dima02,QuaasSW17}.  The present note is
a development of~\cite[Sections III and IV]{QuaasSW17} and further
simplifies the proof of Comon and Jurski's result therein.  In
particular, we avoid many technicalities of~\cite{QuaasSW17} by
employing a simple ``clock memorisation'' trick to reduce computation of the
binary reachability relation of a timed automaton to computation of
the set of configurations reachable from a given location starting
with the all-zeros clock valuation.  We show how to recover the latter
set of configurations as the commutative image of a certain regular
language accepted by a variant of Alur and Dill's region automaton
(cf.~\cite{AlurD94}).

\section{Definitions and Main Result}
Given a set~$\clocks=\{x_1,\ldots,x_n\}$ of \emph{clocks}, the set
$\Phi(\clocks)$ of \emph{clock constraints} is generated by the
grammar
\[ \varphi ::= \true \mid x<k \,\mid\, x = k \,\mid\, x>k \,\mid\,
  \varphi \wedge \varphi \, , \] where $k \in \N$ 
and $x\in \clocks$.  A \emph{clock valuation} is a
mapping~$\val: \clocks \to \RP$, where $\RP$ is the set of
non-negative real numbers.  Denote by $\RP^{\clocks}$ the set of all
clock valuations.  We write $\val\models\varphi$ to denote
that~$\val \in \RP^{\clocks}$ satisfies the constraint $\guard$.
We denote by $\zeroval$ the valuation such
that~$\zeroval(x)=0$ for all $x\in \clocks$.    Given
$t\in\RP$, we let $\val+t$ be the clock valuation such that
$(\val+t)(x)=\val(x)+t$ for all clocks~$x\in\clocks$.  Given
$\reset\subseteq\clocks$, let $\val[\reset\leftarrow 0]$ be the clock
valuation such that $\val[\reset\leftarrow 0](x)=0$ if $x\in\reset$,
and $\val[\reset\leftarrow 0](x)=\val(x)$ if $x\not\in\reset$.  

A \emph{1-bounded zone} $Z \subseteq [0,1]^\clocks$ is a set of clock valuations
 that is defined by a conjunction of
difference constraints $x_i \sim c$ and $x_i - x_j \sim c$, where $c \in \{-1,0,+1\}$,
${\sim} \in \{ <,= \}$, $i,j \in \{1,\ldots,n\}$.  We write
$\mathcal{Z}_1(\clocks)$ for the set of 1-bounded zones.  Given a
1-bounded zone $Z$ and $\lambda\subseteq\clocks$, the following are 
also 1-bounded zones (see, e.g.,~\cite{Bouyer03}):
\begin{eqnarray*}
Z[\lambda \leftarrow 0] &:=& \{ \nu[\lambda\leftarrow 0] : \nu \in Z\}\\
\overrightarrow{Z} &:=& \{ \nu+t : \nu \in Z,t\geq 0\} \cap [0,1]^\clocks \, .
\end{eqnarray*}

A \emph{timed automaton} is a tuple $\A=\tuple{\locs,\clocks,\edges}$,
where~$\locs$ is a finite set of \emph{locations}, $\clocks$ is a
finite set of \emph{clocks}, and
$\edges\subseteq \locs\times \Phi(\clocks)\times 2^\clocks\times
\locs$ is the set of \emph{edges}.  A \emph{configuration} of $\A$ is
a pair $\tuple{\loc,\val}$ consisting of a location~$\loc$ and a clock
valuation~$\val$.  Such a timed automaton $\A$ induces a 
ternary \emph{transition relation} 
\[ {\sto{}} \subseteq (\locs\times \RP^{\clocks}) \times \mathbb{R}
  \times (\locs\times\RP^{\clocks}) \] on the set of configurations as
follows. Given configurations~$\tuple{\loc,\val}$ and
$\tuple{\loc',\val'}$, we postulate:
\begin{itemize}
\item a delay transition~$\tuple{\ell,\val}\sto{d}\tuple{\ell',\val'}$ for
    some $d\geq 0$, if~$\val'=\val+d$ and $\ell=\ell'$;
        \item a discrete transition~$\tuple{\ell,\val}\sto{0}\tuple{\ell',\val'}$, if there
is an edge $\tuple{\ell,\varphi,\reset,\ell'}$ of $\A$
        such that $\val\models\guard$ and $\val'=\val[\reset \leftarrow 0]$.
\end{itemize}
A \emph{run}~$q_0 \sto{d_1} q_1 \sto{d_2} q_2 \sto{d_3} \ldots \sto{d_m}
q_m$ of $\A$ is a finite sequence of delay and discrete transitions.

Let $\mathcal{L}$ denote the set of first-order formulas over the
structure $\mathcal{R}=(\mathbb{R},\mathbb{Z}(\cdot),0,1,+,\leq)$,
where $\mathbb{R}$ is the universe and $\mathbb{Z}(\cdot)$ is a unary
predicate denoting the set of integers.  We call $\mathcal{L}$ the
language of \emph{mixed linear arithmetic}.  This language subsumes
both Presburger arithmetic (linear arithmetic over $\mathbb{Z}$) and
linear arithmetic over $\mathbb{R}$.  It is shown in~\cite[Section
  3]{BoigelotJW05} how to rewrite a given $\mathcal{L}$-sentence in
polynomial time into an equivalent Boolean formula whose atoms are
either sentences of real linear arithmetic or sentences of Presburger
arithmetic.  From known complexity bounds for the latter two
theories~\cite{Berman80}, it follows that deciding the truth of
$\mathcal{L}$-sentences can be carried out by an alternating Turing
machine running in time $2^{2^{O(n)}}$ with $n$ alternations (i.e.,
the same complexity as Presburger arithmetic).

The main contribution of the present paper is to prove the following result.
\begin{theorem}
Given a timed automaton $\A$ with $n$ clock variables and
locations~$\ell_0,\ell$ of $\A$, we can compute an
$\mathcal{L}$-formula
$\varphi^{\A}_{\ell_0,\ell}(x_1,\ldots,x_n,y_1,\ldots,y_n)$ such that
there is a run of $\A$ from configuration~$\tuple{ \ell_0,\val_0}$ to
configuration~$\tuple{ \ell,\val}$ iff $\mathcal{R} \models
\varphi^{\A}_{\loc_0,\loc}[\val_0,\val]$.
\label{thm:main}
\end{theorem}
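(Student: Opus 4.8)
\emph{Proof sketch.} The plan is to first remove the dependence on the arbitrary source valuation by a clock-memorisation construction, and then to define the resulting single-source reachable set via a region-automaton abstraction whose commutative image is semilinear.

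I would begin by building an augmented automaton $\A'$ over the clocks $\clocks$ together with $n$ fresh \emph{memory} clocks $m_1,\dots,m_n$ and one \emph{elapsed-time} clock $g$. From a new initial location, a gadget using only delays and staggered resets can install from $\zeroval$ any valuation $\val_0$ at $\loc_0$, while mirroring each reset of $x_i$ on $m_i$ so that $m_i=\val_0(x_i)$ when the simulation begins; the clock $g$ is reset exactly once at this point, and neither $g$ nor any $m_i$ is reset while $\A'$ thereafter simulates $\A$. Since no reset touches these clocks during the simulation, at every later configuration $g$ equals the time elapsed since the simulation began and $m_i=\val_0(x_i)+g$, so $\val_0(x_i)=m_i-g$ is recoverable. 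If $\psi_\loc(x_1,\dots,x_n,m_1,\dots,m_n,g)$ is an $\mathcal{L}$-formula defining the configurations reachable from $\tuple{\loc_0,\zeroval}$ in $\A'$, the theorem follows by setting
\[
  \varphi^{\A}_{\loc_0,\loc}(x_1,\dots,x_n,y_1,\dots,y_n):=\exists m_1\cdots\exists m_n\,\exists g\;\bigl(\psi_\loc(y_1,\dots,y_n,m_1,\dots,m_n,g)\wedge\textstyle\bigwedge_{i=1}^{n}x_i=m_i-g\bigr).
\]

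To obtain $\psi_\loc$, I would represent each real clock value as an integer part plus a fractional part and track only the fractional data with a region abstraction. Concretely, build a finite automaton $\B$ --- a variant of Alur and Dill's region automaton --- whose states are pairs $\tuple{\loc,Z}$ with $Z\in\mathcal{Z}_1(\clocks)$ recording the fractional values and their ordering. A time step within a unit is the operation $\overrightarrow{Z}$; an edge reset is $Z[\reset\leftarrow 0]$; and the event of a clock reaching $1$ (crossing an integer) wraps its fractional value to $0$ and emits a dedicated letter $\tau_i$, while guards $x_i\sim k$ are tested using a saturated integer part stored in the state. By design, runs of $\A'$ from $\tuple{\loc_0,\zeroval}$ correspond to accepting paths of $\B$, and along such a path the exact final value of $x_i$ equals the number of $\tau_i$-letters read since $x_i$ was last reset, plus the fractional value prescribed by the final zone. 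By Parikh's theorem the commutative image of the regular language accepted by $\B$ (with accepting state $\tuple{\loc,Z}$) is semilinear, hence Presburger-definable; the integer part of each final value is a linear function of this Parikh vector, whereas the fractional parts are fresh real variables constrained by $Z$, a conjunction of difference constraints definable in real linear arithmetic. Summing the two contributions $x_i=(\text{integer part})+(\text{fractional part})$ and disjoining over the finitely many final zones yields $\psi_\loc$. This is exactly where $\mathcal{L}$ is needed: the tick counts live in Presburger arithmetic and the fractional parts in real linear arithmetic, and $\mathcal{L}$ is their common extension.

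The hard part will be the faithful recovery of the \emph{unbounded} integer parts. The region abstraction deliberately forgets the exact integer part of a clock, so it must be reconstituted from the commutative image; but a Parikh vector records only total letter counts along a path, whereas the final value of $x_i$ depends solely on the $\tau_i$-letters read after $x_i$'s last reset, and counts cannot be subtracted in a commutative image. The delicate point is therefore to design $\B$ so that precisely the per-clock post-last-reset counts are surfaced --- for instance by nondeterministically guessing each clock's final reset and switching to a dedicated counting letter thereafter --- and to prove the resulting correspondence sound and complete: every Parikh vector consistent with a $\B$-path must be realisable by a genuine timed run of $\A'$, and conversely. Establishing this equivalence between semilinear tick-count vectors and real-time runs is the technical heart of the argument.
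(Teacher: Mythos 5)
Your proposal is correct and follows essentially the same route as the paper: your memory-clock/elapsed-time-clock gadget is exactly the paper's clock-memorisation automaton $\B$ (primed copies $x'$ plus reference clock $z$), and your fix for the counting problem---nondeterministically guessing each clock's last reset and emitting a dedicated counting letter only thereafter---is precisely the paper's prophecy variable $\gamma$ in the discrete automaton $R(\A)$, whose visible wrapping letters give the integer parts via the commutative image of a regular language. What you identify as the remaining technical heart is carried out in the paper as Propositions~\ref{prop:discrete} and~\ref{prop:effective}.
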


\section{Proofs}
\label{sec:proofs}
\subsection{Clock Memorisation}
\label{subsec:mem}
We describe a simple trick that reduces the problem of computing the
binary reachability relation on a given timed automaton $\A$ to that
of computing the set of configurations reachable from a fixed initial
configuration in a derived automaton $\B$.  The idea is that $\B$
starts from the zero clock valuation, guesses an initial clock
valuation $\nu_0$ of $\A$, and then simulates a computation of $\A$
while ``remembering'' $\nu_0$ as a set of differences between the
values of some fresh clocks.  Formally we derive
Theorem~\ref{thm:main} from the following result, which we prove later on.
\begin{proposition}
  Given a timed automaton $\A$ with $n$ clocks and locations $\loc_0,\loc$ 
  of $\A$, we can compute an $\mathcal{L}$-formula
  $\psi^{\A}_{\loc_0,\loc}(x_1,\ldots,x_n)$ such that there is a run in $\A$
  from $\tuple{\loc_0,\boldsymbol{0}}$ to $\tuple{\loc,\nu}$ if and
  only if $\mathcal{R} \models \psi^{\A}_{\loc_0,\loc}[\nu]$.
\label{prop:main2}
\end{proposition}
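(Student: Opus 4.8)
The plan is to realise the reachable set $\{\,\val : \tuple{\loc_0,\zeroval}\text{ reaches }\tuple{\loc,\val}\,\}$ as the commutative image of a regular language, following the strategy announced in the introduction. The guiding idea is to split each clock value into its integer and fractional parts: the fractional parts of all clocks, together with their relative order and which of them are zero, range over a $1$-bounded zone and are thus describable in real linear arithmetic, whereas the integer parts are recovered by \emph{counting}, along a run, how many times each clock crosses an integer value. I would therefore build a finite automaton whose runs abstract the runs of $\A$ from $\tuple{\loc_0,\zeroval}$, and whose alphabet contains a letter $a_i$ emitted exactly when clock $x_i$ passes an integer; the Parikh (commutative) image of this automaton then encodes the attainable integer parts.

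Concretely, I would take as states triples consisting of a location of $\A$, a map recording the integer part of each clock capped at the largest constant $M$ occurring in the guards of $\A$, and a $1$-bounded zone $Z \in \mathcal{Z}_1(\clocks)$ describing the fractional configuration (with fractional values in $[0,1)$ by the wrapping convention below). Discrete transitions of $\A$ are simulated directly: a guard $x \sim k$ is evaluated from the capped integer part together with the zero/non-zero information held in $Z$, and a reset $\reset$ acts on the zone as $Z[\reset \leftarrow 0]$ and zeroes the corresponding integer parts. A delay is simulated by a sequence of elementary time steps, each advancing the fractional configuration via $\overrightarrow{Z}$ up to the next integer crossing; whenever a clock $x_i$ reaches an integer its fractional value wraps to $0$, its capped integer part is incremented, and the letter $a_i$ is emitted. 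Since the capped integer parts, the zones, and the locations are finitely many, this is a finite automaton, and by the fundamental (time-abstract) property of regions its accepting runs ending in a state with location $\loc$ and zone $Z$ are in exact correspondence with the runs of $\A$ reaching $\loc$ with fractional configuration in $Z$.

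To extract the formula I would fix a final zone $Z$. The set of integer-part vectors attainable on runs ending in a state with location $\loc$ and zone $Z$ is the commutative image, restricted to the letters $a_1,\ldots,a_n$, of the (regular) language of such runs; by Parikh's theorem this set is semilinear, hence definable by a Presburger formula $\Pi_Z(c_1,\ldots,c_n)$. Writing $\theta_Z(f_1,\ldots,f_n)$ for the real-arithmetic formula defining the $1$-bounded zone $Z$, and decomposing each clock as $\val(x_i)=c_i+f_i$ with $c_i\in\N$ and $f_i\in[0,1)$ its fractional part, the reachable set for this $Z$ is defined by $\exists c_1\cdots\exists c_n\,\big(\bigwedge_i \mathbb{Z}(c_i)\wedge\Pi_Z(c_1,\ldots,c_n)\wedge\theta_Z(x_1-c_1,\ldots,x_n-c_n)\big)$, which is an $\mathcal{L}$-formula. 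Taking the disjunction over the finitely many zones $Z$ yields $\psi^{\A}_{\loc_0,\loc}$.

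The step I expect to be the main obstacle is the bookkeeping of crossings across resets. The integer part of the \emph{final} value of $x_i$ equals the number of times $x_i$ crosses an integer \emph{after its last reset}, whereas the commutative image counts crossings over the whole run and, forgetting order, cannot by itself isolate the final segment. I would handle this by enriching the state with, for each clock, a flag marking that its last reset has occurred: the automaton guesses this point, forbids any further reset of that clock, and emits $a_i$ only once the flag is set (this costs a factor $2^n$ in the state space but preserves finiteness). The remaining care is in checking exactness, namely that for a fixed accepting run the attained integer-part vector is rigid while the fractional part ranges over all of the final zone $Z$, so that the combination of $\Pi_Z$ with $Z$ captures precisely the reachable valuations; this follows from the exactness of the zone operations $\overrightarrow{\,\cdot\,}$ and $\cdot[\reset\leftarrow 0]$ on $1$-bounded zones.
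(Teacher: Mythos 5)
Your proposal is correct and follows essentially the same route as the paper: your per-clock ``last reset has occurred'' flag is precisely the complement of the paper's prophecy variable $\gamma$ (which makes a wrapping transition visible only when the clock will never be reset again), and your capping of integer parts at the maximal constant is exactly the bisimulation quotient the paper uses to make the discrete automaton $R(\A)$ yield a regular language. The only organisational difference is that the paper keeps exact, unbounded integer parts in an infinite-state automaton and recovers finiteness by quotienting, and it proves the run correspondence between $\A$ and the discrete automaton explicitly rather than citing the region-automaton property.
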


\begin{proof}[Proof of Theorem~\ref{thm:main}]
Given a timed automaton
$\A=\tuple{\locs,\clocks,\edges}$ with a
distinguished location $\loc_0\in\locs$, we define a
new timed automaton $\B =
\tuple{\locs',\clocks',\edges'}$ as follows.  The
set of locations is $\locs'=\locs \cup \{ \loc'_0 \}$,
where $\loc'_0\not\in \locs$ is a distinguished 
location in $\B$.  The set of clocks is $\clocks' = \{ x, x' : x
\in \clocks \} \cup \{ z\}$, where $z\not\in\clocks$,
that is, $\B$ has two copies of each clock of $\A$ plus an extra
``reference clock'' $z$.  We obtain $\edges'$ by adding the following edges to
$\edges$:
for each $x\in\clocks$ we have an edge
$\tuple{\loc'_0,\mathbf{true},\{x,x'\},\loc'_0}$ in $\edges'$
(that is, a selfloop on $\loc'_0$ that resets both copies of clock
$x$); we also have a single additional edge
$\tuple{\loc'_0,\mathbf{true},\{ z \},\loc_0}$ in $\edges'$.
Notice that once $\B$ leaves $\loc'_0$ then neither the reference
clock~$z$ nor any clock in the set $\{x'\mid x\in\clocks\}$ is
reset---intuitively when $\B$ exits $\loc'_0$, the value of clock~$x \in \clocks$ is
stored in the difference of $x'$ and $z$.

Observe that for all $\ell \in \locs$ there is a run from
$\tuple{\loc_0,\nu_0}$ to $\tuple{\loc,\nu}$ in $\A$ if and only if
there is a run in $\B$ from $\tuple{ \loc'_0, \boldsymbol{0}}$ to
$\tuple{\loc,\nu'}$ such that $\nu'(x) = \nu(x)$ and $\nu'(x') -
\nu'(z) = \nu_0(x)$ for all $x \in \clocks$.  In particular, a run of
$\A$ from $\tuple{\loc_0,\nu_0}$ to $\tuple{\loc,\nu}$ can be
simulated in $\B$ as follows.  Automaton $\B$ starts in configuration
$\tuple{\loc_0',\boldsymbol 0}$; by taking selfloops on $\loc_0'$ and
then the edge $\tuple{\loc'_0,\mathbf{true},\{ z \},\loc_0}$, $\B$ may
reach a configuration $\tuple{\loc_0,\nu''}$ such that
$\nu''(x)=\nu''(x')=\nu_0(x)$ for all $x \in \clocks$, and $\nu''(z)=0$; then $\B$ directly
simulates the given run of $\A$ (without resetting the new clocks
$x'$, $x \in \clocks$, and $z$).

Let $\psi^{\B}_{\loc'_0,\loc}(\boldsymbol x,\boldsymbol x',z)$ be the formula
obtained in Proposition~\ref{prop:main2} for automaton $\B$.  Then we define
\[ \varphi^{\A}_{\loc_0,\loc}(\boldsymbol x,\boldsymbol y) :=
   \exists \boldsymbol x' \exists z \, . \, (\boldsymbol x = \boldsymbol x'-z\boldsymbol{1} \wedge 
\psi^{\B}_{\loc'_0,\loc}(\boldsymbol y,\boldsymbol x',z)) \, . \]
\end{proof}

\subsection{A Discrete-Time Automaton}
Given a timed automaton $\A=\tuple{\locs,\clocks,\edges}$, we define an (untimed, infinite-state)
nondeterministic automaton $R(\A)$.  The set of states of $R(\A)$ is
\[ Q=\locs \times \mathbb{N}^\clocks \times \mathcal{Z}_1(\clocks)
  \times 2^{\clocks} \, .\] Given a state
$\tuple{\loc,\upsilon,Z,\gamma}\in Q$, intuitively $\upsilon$ and $Z$
respectively encode the integer and fractional parts of the clocks of $\A$,
while $\gamma\subseteq \clocks$ is a ``prophecy variable'' denoting
the set of clocks that will be reset at least once in the future.  The alphabet of
$R(\A)$ is ${\mathcal{X}}$ and the set of transitions (which includes
$\varepsilon$-transitions) is as follows:
\begin{enumerate}
\item For each state $\tuple{\loc,\upsilon,Z,\gamma}\in Q$ 
there is a \emph{delay transition}
  $\tuple{\loc,\upsilon,Z,\gamma}
  \stackrel{\varepsilon}{\longrightarrow}
  \tuple{\loc,\upsilon,\overrightarrow{Z},\gamma}$.

\item For each clock $x\in\clocks$ and state
  $\tuple{\loc,\upsilon,Z,\gamma}$ of $R(\A)$ there is a
  \emph{wrapping transition}
  $\tuple{\loc,\upsilon,Z,\gamma} \stackrel{\sigma}{\longrightarrow}
  \tuple{\loc,\upsilon',Z',\gamma}$ where
  $Z':=(Z\cap [\![x=1]\!])[x \leftarrow 0]$,
  $\upsilon'=\upsilon[x\leftarrow x+1]$, and $\sigma=\varepsilon$ if
  $x \in \gamma$ but otherwise $\sigma=x$.  (Intuitively a wrapping
  transition for clock~$x$ has label $\varepsilon$ if $x \in \gamma$
  since $x$ will be reset again in the future.)

\item Each edge $\tuple{\loc,\varphi,\lambda,\loc'}$ of $\A$ yields a
  transition $\tuple{\loc,\upsilon,Z,\gamma}
  \stackrel{\varepsilon}{\longrightarrow}
  \tuple{\loc',\upsilon',Z',\gamma'}$ of $R(\A)$, where
  $\upsilon'=\upsilon[\lambda \leftarrow 0]$,   $Z':=\{ \nu \in Z :
  \upsilon+\nu \models \varphi \}[\lambda \leftarrow 0]$, and  $\gamma'\cup\lambda=\gamma$.
(Intuitively, $\gamma'$ is obtained from $\gamma$ by guessing some clocks that will not be reset again
and removing them from $\gamma$.)

\end{enumerate}
Given states $q,q' \in Q$, we write
$q\stackrel{w}{\longrightarrow} q'$ if there is a run in $R(\A)$ from
$q$ to $q'$ on word $w\in\clocks^*$.  We furthermore write
$q\longrightarrow^* q'$ if there exists a run from $q$ to $q'$.


\begin{proposition}
  Automaton $\A$ has a run from 
  $\tuple{\loc_0,\boldsymbol{0}}$ to $\tuple{\loc,\nu}$
  along which the set of clocks that are reset is
  $\gamma\subseteq\clocks$ if and only if $R(\A)$ has a run from
  $\tuple{\loc_0,\boldsymbol{0},\{\boldsymbol{0}\},\gamma}$ to
  $\tuple{\loc,\upsilon,Z,\emptyset}$ for some $\upsilon\in\mathbb{N}^\clocks$ and
  $Z\in\mathcal{Z}_1(\clocks)$ such that $\nu-\upsilon \in Z$.
\label{prop:discrete}
\end{proposition}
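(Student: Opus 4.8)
The plan is to establish a tight step-by-step correspondence between runs of $\A$ and runs of $R(\A)$ under the encoding that decomposes a clock valuation into an integer part $\upsilon\in\mathbb{N}^\clocks$ and a fractional part recorded by a $1$-bounded zone. Say that a state $\tuple{\loc,\upsilon,Z,\gamma}$ of $R(\A)$ \emph{represents} a configuration $\tuple{\loc,\mu}$ of $\A$ when $\mu-\upsilon\in Z$. Under this reading the three kinds of $R(\A)$-transition are meant to mirror, respectively, the passage of time, the crossing of an integer boundary (a pure re-encoding of the same real valuation), and a discrete edge of $\A$, while the prophecy component $\gamma$ records the clocks still to be reset. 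I would prove the two implications separately, the backward one by a one-step backward simulation and the forward one by a one-step forward simulation, and finally account for the reset set $\gamma$ by a short set-algebra argument.

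For the backward direction I would prove the following one-step claim: for every transition $\tuple{\loc_i,\upsilon_i,Z_i,\gamma_i}\longrightarrow\tuple{\loc_{i+1},\upsilon_{i+1},Z_{i+1},\gamma_{i+1}}$ of $R(\A)$ and every valuation $\mu_{i+1}$ with $\mu_{i+1}-\upsilon_{i+1}\in Z_{i+1}$, there is a valuation $\mu_i$ with $\mu_i-\upsilon_i\in Z_i$ and a matching $\A$-move from $\tuple{\loc_i,\mu_i}$ to $\tuple{\loc_{i+1},\mu_{i+1}}$. Each case is a direct unfolding of the zone operations. For a delay transition, $\mu_{i+1}-\upsilon_i\in\overrightarrow{Z_i}$ means $\mu_{i+1}-\upsilon_i=g+t$ with $g\in Z_i$, $t\ge 0$; taking $\mu_i:=\upsilon_i+g$ gives the delay $\tuple{\loc_i,\mu_i}\sto{t}\tuple{\loc_i,\mu_{i+1}}$. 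For a wrapping transition, $\mu_{i+1}-\upsilon_{i+1}=h[x\leftarrow 0]$ for some $h\in Z_i$ with $h(x)=1$; taking $\mu_i:=\upsilon_i+h$ gives $\mu_i=\mu_{i+1}$, so the wrapping step corresponds to the identity move (a zero delay). For an edge transition, the definition of $Z_{i+1}$ supplies $g\in Z_i$ with $\upsilon_i+g\models\varphi$ and $g[\lambda\leftarrow 0]=\mu_{i+1}-\upsilon_{i+1}$; taking $\mu_i:=\upsilon_i+g$ yields $\mu_i\models\varphi$ and $\mu_i[\lambda\leftarrow 0]=\mu_{i+1}$, i.e.\ a discrete transition. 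Chaining these from the given target $\tuple{\loc,\nu}$ back along the $R(\A)$-run, and using that $Z_0=\{\boldsymbol 0\}$ and $\upsilon_0=\boldsymbol 0$ force $\mu_0=\boldsymbol 0$, produces the desired run of $\A$.

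For the forward direction I would conversely show that each $\A$-move is simulated by $R(\A)$ while preserving the representation relation. A discrete transition along an edge $\tuple{\loc,\varphi,\lambda,\loc'}$ is handled by the corresponding edge transition: if $\nu-\upsilon\in Z$ and $\nu\models\varphi$ then $\nu-\upsilon$ survives the guard intersection, so $(\nu[\lambda\leftarrow 0])-(\upsilon[\lambda\leftarrow 0])=(\nu-\upsilon)[\lambda\leftarrow 0]\in Z'$. The delay transition is the only genuinely technical point: a delay of duration $d$ is simulated by interleaving $\overrightarrow{(\cdot)}$-transitions with wrapping transitions, reproducing the time-successor construction of the region automaton. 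Concretely, I would repeatedly let time elapse until some fractional coordinate reaches $1$, then apply wrapping transitions to every clock currently at value $1$ (each incrementing its integer component), and recurse on the remaining delay; termination is clear since only finitely many integer boundaries are crossed, and one checks that the represented valuation tracks $\nu+d$ throughout. Simultaneous crossings and the treatment of exact integer boundaries are the fiddly cases, handled by applying the single-clock wrapping transitions in sequence, and this is where I expect the main effort to lie; everything else reduces to the definitions.

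Finally, the prophecy and the reset set are linked by a telescoping argument. Delay and wrapping transitions leave $\gamma$ unchanged, while an edge transition imposes $\gamma'\cup\lambda=\gamma$, forcing $\lambda\subseteq\gamma$ and $\gamma\setminus\lambda\subseteq\gamma'\subseteq\gamma$. Hence along any $R(\A)$-run from prophecy $\gamma$ to prophecy $\emptyset$ the resets $\lambda_j$ occurring on the edge transitions satisfy $\bigcup_j\lambda_j=\gamma$: each $\lambda_j$ lies in the current, hence initial, prophecy, while the dropped clocks $\gamma^{(j-1)}\setminus\gamma^{(j)}$ telescope to $\gamma\setminus\emptyset=\gamma$. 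This shows the reset set of the reconstructed $\A$-run is exactly $\gamma$; conversely, in the forward simulation I would set each prophecy to the set of clocks reset in the remaining suffix of the given $\A$-run, which starts at $\gamma$, decreases exactly as permitted by $\gamma'\cup\lambda=\gamma$, and ends at $\emptyset$. Combining the two simulations with this bookkeeping gives the proposition.
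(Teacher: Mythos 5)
Your proof is correct and follows essentially the same route as the paper's: a backward one-step simulation (choosing witness valuations case-by-case for delay, wrapping, and edge transitions) for the ``if'' direction, and a forward simulation for the ``only-if'' direction in which each delay of $\A$ is split at integer-boundary crossings into a delay transition of $R(\A)$ followed by wrapping transitions. Your explicit telescoping argument for the prophecy component $\gamma$ spells out bookkeeping that the paper leaves implicit, but it is the same construction.
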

\begin{proof}
 For the ``if'' direction, suppose that
\[ \tuple{\loc_0,\upsilon^{(0)},Z_0,\gamma_0} \longrightarrow                                                                                                 \tuple{\loc_1,\upsilon^{(1)},Z_1,\gamma_1} \longrightarrow                                                                                           
   \ldots \longrightarrow                                                                                                                                  
   \tuple{\loc_k,\upsilon^{(k)},Z_k,\gamma_k} \]
is a run of $R(\A)$ with $\upsilon^{(0)}=\boldsymbol{0}$,
$Z_0=\{\boldsymbol{0}\}$, and $\gamma_0=\gamma$.
Given any valuation $\nu^{(k)} \in Z_k$,
we construct a sequence of valuations $\nu^{(0)} \in Z_0,\ldots,\nu^{(k-1)} \in Z_{k-1}$
such that
$\A$ has a run
\[ \tuple{\loc_0,\upsilon^{(0)}+\nu^{(0)}} \Longrightarrow                                                                                                 
   \tuple{\loc_1,\upsilon^{(1)}+\nu^{(1)}} \Longrightarrow                                                                                                 
   \ldots \Longrightarrow                                                                                                                                  
\tuple{\loc_k,\upsilon^{(k)}+\nu^{(k)}} \, . \]
The construction of $\nu^{(j)}$ is by backward induction on $j$.  The
base step, valuation $\nu^{(k)}$, is given.  The induction step
divides into three cases according to the type of the transition
\[ \tuple{\loc_{j-1},\upsilon^{(j-1)},Z_{j-1},\gamma_{j-1}}                            \longrightarrow \tuple{\loc_{j},\upsilon^{(j)},Z_j,\gamma_j} \, .\]
\begin{itemize}
\item Delay transition.  We have
  $Z_j = \overrightarrow{Z_{j-1}}$,
  $\loc_j = \loc_{j-1}$, and $\upsilon^{(j)}=\upsilon^{(j-1)}$.  Thus
  we can pick $\nu^{(j-1)} \in {Z_{j-1}}$ such that
  $\nu^{(j)}=\nu^{(j-1)}+d$ for some $d\geq 0$.  Hence there is in $\A$ a
  delay transition
\[ \tuple{\loc_{j-1},\upsilon^{(j-1)}+\nu^{(j-1)}} \stackrel{d}{\Longrightarrow}                                                                            
   \tuple{\loc_j,\upsilon^{(j)}+\nu^{(j)}} \, . \]
\item Wrapping transition.  We have $Z_j = (Z_{j-1} \cap                     [\![x=1]\!])[x \leftarrow 0]$ for some clock $x\in\clocks$.  Thus we can pick
$\nu^{(j-1)} \in {Z_{j-1} \cap [\![x=1]\!]}$ such that
$\nu^{(j)}=\nu^{(j-1)}[x\leftarrow 0]$.  In this case we have
\[ \tuple{\loc_{j-1},\upsilon^{(j-1)}+\nu^{(j-1)}} =                                                                                                        
   \tuple{\loc_j,\upsilon^{(j)}+\nu^{(j)}} \, . \]
\item Discrete transition.  Let the corresponding edge of $\A$ be
  $\tuple{\loc_{j-1},\varphi,\lambda,\loc_j}$.  Then we have
$\upsilon^{(j)} = \upsilon^{(j-1)}[\lambda \gets 0]$ and   
$Z_j = \{ \nu \in Z_{j-1} : \nu+\upsilon^{(j-1)} \models \varphi\} 
[\lambda \leftarrow 0]$.
Choose
  $\nu^{(j-1)} \in Z_{j-1}$ such that 
  $\nu+\upsilon^{(j)} \models \varphi$ and 
  $\nu^{(j)} = \nu^{(j-1)}[\lambda \leftarrow 0]$.  Then there is in $\A$ a
  discrete transition
\[ \tuple{\loc_{j-1},\upsilon^{(j-1)}+\nu^{(j-1)}} \stackrel{0}{\Longrightarrow}                                                                            
   \tuple{\loc_j,\upsilon^{(j)}+\nu^{(j)}} \, . \]
\end{itemize}

We now turn to the  ``only-if'' direction of the proof.
Suppose that we have a run
\[ \tuple{\loc_0,\nu^{(0)}} \stackrel{d_1}{\Longrightarrow}                                                                                                 
   \tuple{\loc_1,\nu^{(1)}} \stackrel{d_2}{\Longrightarrow}                                                                                                 
\ldots \stackrel{d_k}{\Longrightarrow} \tuple{\loc_k,\nu^{(k)}} \]
  of $\A$, where $\nu^{(0)}=\boldsymbol{0}$.  We first
  transform such a run, while keeping the same initial and final
  configurations, by decomposing each delay step into a sequence of
  shorter delays, so that for all $0 \leq j \leq k-1$ and all
  $x\in\clocks$ the open interval $(\nu^{(j)}(x),\nu^{(j+1)}(x))$
  contains no integer.  In other words, we break every delay step at every
  point at which some clock crosses an integer boundary.  We thus
  obtain a corresponding run of $R(\A)$ that
  starts from state
  $\tuple{\loc_0,\upsilon^{(0)},Z_0,\gamma_0}$, where
$Z_0 = \{ \boldsymbol{0} \}$, $\upsilon^{(0)} = \boldsymbol{0}$,
and ends in state $\tuple{\loc_k,\upsilon^{(k)},Z_k,\emptyset}$ such that
$\nu^{(k)} \in \upsilon^{(k)}+Z_k$.

We build such a run of $R(\A)$ by forward induction.  In particular,
we construct a sequence of intermediate states
$\tuple{\loc_i,\upsilon^{(i)},Z_i,\gamma_i}$, $0 \leq i \leq k$, such
that $\nu^{(i)} \in \upsilon^{(i)} + Z_i$ for each such $i$.  Each
discrete transition of $\A$ is simulated by a discrete transition of
$R(\A)$.  A delay transition of $\A$ that ends with set of clocks
$\lambda \subseteq \clocks$ being integer valued is simulated by a
delay transition of $R(\A)$, followed by wrapping transitions for all
$x \in \lambda$.
\end{proof}

\begin{proposition}
  Let $\A=\langle \locs,\clocks,E\rangle$ be a timed automaton with
  distinguished locations $\loc_0,\loc \in L$.  For every
  $Z\in\mathcal{Z}_1(\clocks)$ and $\gamma\subseteq\clocks$ the set
\begin{gather} \left\{ \upsilon \in \mathbb{N}^{\clocks} : \exists w\in\clocks^* \cdot 
  \tuple{\loc_0,\boldsymbol{0},\{\boldsymbol{0}\},\gamma} \stackrel{w}{\longrightarrow}
  \tuple{\loc,\upsilon,Z,\emptyset} \right\}
\label{eq:theset}
\end{gather}
is effectively semilinear.
\label{prop:effective}
\end{proposition}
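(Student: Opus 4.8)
The plan is to exhibit the set~\eqref{eq:theset} as the commutative image (Parikh image) of a regular language and then invoke Parikh's theorem. The crucial observation is that, along any run of $R(\A)$, the final integer valuation $\upsilon$ is recovered exactly from the labelling word $w$: for each clock $x \in \clocks$ we will have $\upsilon(x) = \abs{w}_x$, the number of occurrences of the letter $x$ in $w$. To see this, recall that $\upsilon(x)$ is incremented by every wrapping transition for $x$, reset to $0$ by every discrete transition that resets $x$, and that a wrapping transition for $x$ is labelled $x$ precisely when $x \notin \gamma$. The prophecy discipline forces $\gamma$ to shrink only at discrete transitions and only by removing clocks that are reset there (since $\gamma' \cup \lambda = \gamma$ gives both $\gamma' \subseteq \gamma$ and $\lambda \subseteq \gamma$, whence any $x \in \gamma \setminus \gamma'$ lies in $\lambda$); moreover a run reaching $\gamma = \emptyset$ must, for each $x \in \gamma$, have removed $x$ at what is necessarily its last reset. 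Hence every wrapping transition for $x$ occurring before the last reset of $x$ carries the label $\varepsilon$ (as $x \in \gamma$ there) and does not contribute to $\abs{w}_x$, while every wrapping transition after the last reset carries the label $x$; since the last reset zeroes $\upsilon(x)$, the surviving increments number exactly $\abs{w}_x$, as claimed. When $x \notin \gamma$ initially it is never reset and the same count holds with no reset at all.

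Second, I would replace the infinite-state automaton $R(\A)$ by a finite automaton $R'(\A)$ that tracks the integer valuation only up to the largest constant $M$ appearing in a guard of $\A$. Its states are $\locs \times \{0,\ldots,M,\top\}^{\clocks} \times \mathcal{Z}_1(\clocks) \times 2^{\clocks}$, where a coordinate $\top$ records that the true value exceeds $M$; wrapping increments this capped counter (with $M$ advancing to $\top$ and $\top$ absorbing), resets set it to $0$, and delays leave it unchanged. The point is that guard satisfaction depends only on the capped value: for a constraint $x \sim k$ with $k \le M$ and $\nu \in Z \subseteq [0,1]^{\clocks}$, the truth of $\upsilon(x) + \nu(x) \sim k$ is determined by $\nu(x)$ together with the capped value of $\upsilon(x)$, since a value exceeding $M$ makes every such constraint behave as $x > k$. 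Consequently the capping map $\pi$ sending $\tuple{\loc,\upsilon,Z,\gamma}$ to its capped counterpart is a label-preserving correspondence that both carries transitions of $R(\A)$ to transitions of $R'(\A)$ and lifts transitions of $R'(\A)$ back to $R(\A)$; runs from the fixed initial state therefore match up in the two automata letter for letter, with matching locations, zones and prophecy sets.

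Combining the two steps, let $F$ be the set of $R'(\A)$-states of the form $\tuple{\loc,\hat\upsilon,Z,\emptyset}$ with $\loc$ and $Z$ fixed and $\hat\upsilon$ arbitrary, and let $L' \subseteq \clocks^*$ be the regular language of words labelling runs of $R'(\A)$ from $\tuple{\loc_0,\boldsymbol 0,\{\boldsymbol 0\},\gamma}$ into $F$. Using the run correspondence together with $\upsilon(x) = \abs{w}_x$, I would verify that~\eqref{eq:theset} equals the Parikh image of $L'$: every witnessing run of $R(\A)$ yields a word $w \in L'$ whose letter counts are exactly $\upsilon$, and conversely every $w \in L'$ lifts to a run of $R(\A)$ ending in $\tuple{\loc, \upsilon, Z, \emptyset}$ with $\upsilon(x) = \abs{w}_x$. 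Since $L'$ is regular, Parikh's theorem gives an effective semilinear description of its commutative image, and effectiveness is preserved throughout because $M$, the automaton $R'(\A)$, and the accepting set $F$ are all computable from $\A$.

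I expect the main obstacle to be the first step: establishing $\upsilon(x) = \abs{w}_x$ rigorously, that is, checking that the prophecy mechanism places the label $x$ on exactly those wrapping transitions that survive to the final value. The bookkeeping around the \emph{last} reset of each clock, and the need to treat never-reset clocks separately, is where care is required; by contrast the capping argument and the invocation of Parikh's theorem are routine once this correspondence is in place.
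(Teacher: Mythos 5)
Your proposal is correct and takes essentially the same route as the paper: both express the set \eqref{eq:theset} as the Parikh image of the regular language of labels of runs into $\tuple{\loc,\cdot,Z,\emptyset}$, and both obtain regularity by collapsing integer clock values above the maximal constant --- your capped automaton $R'(\A)$ is exactly the quotient of $R(\A)$ by the finite-index bisimulation $\sim$ used in the paper's proof. The differences are matters of detail, both in your favour: you actually prove the key identity $\upsilon(x)=\abs{w}_x$ via the last-reset analysis of the prophecy sets (the paper asserts it in a single sentence), and your cap $\{0,\ldots,M,\top\}$ with $\top$ denoting values strictly greater than $M$ is the more careful choice, since the paper's relation identifies all values $\geq c_{\max}$ and hence mishandles guards such as $x=c_{\max}$ (e.g.\ $\upsilon(x)=c_{\max}$ and $\upsilon'(x)=c_{\max}+1$ are related by $\sim$ but disagree on that guard when $\nu(x)=0$).
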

\begin{proof}
Consider the following language of ``wrapping transitions'' in $R(\A)$:
\[ L_{\mathrm{wrap}}:=\left\{w \in \clocks^* : \exists \upsilon \in  \mathbb{N}^{\clocks} \cdot 
  \tuple{\loc_0,\boldsymbol{0},\{\boldsymbol{0}\},\gamma}
  \stackrel{w}{\longrightarrow} \tuple{\loc,\upsilon,Z,\emptyset}
  \right\} \, . \] Define the function $\pi : \clocks^* \rightarrow
  \mathbb{N}^\clocks$ such that $\pi(w)(x)$ is the number of
  occurrences of letter $x$ in word $w$.  Since visible
  transitions of $R(\A)$ correspond to wrapping transitions of clocks
  that will not be reset any more, the commutative image
  $\pi(L_{\mathrm{wrap}})$ is precisely the set of vectors defined in
  (\ref{eq:theset}).

We claim that $L_{\mathrm{wrap}}$ is regular, which 
suffices to show that the set (\ref{eq:theset}) is semilinear.
  Indeed, denote by
$c_{\max}$ the largest constant appearing in a transition constraint in
$\A$, and consider the equivalence relation on states of $R(\A)$
defined by $(\loc,\upsilon,Z,\gamma) \sim (\loc,\upsilon',Z,\gamma)$
iff for all $x\in \clocks$ either $\upsilon(x)=\upsilon'(x)$ or
$\upsilon(x),\upsilon'(x) \geq c_{\max}$.  Clearly if
$(\loc,\upsilon,Z,\gamma) \sim (\loc,\upsilon',Z,\gamma)$ then for all
$\nu\in Z$ and clock constraints $\varphi$ appearing in $\A$ we have
that $\upsilon+\nu\models \varphi$ iff $\upsilon'+\nu\models \varphi$.
It follows that $\sim$ is a strong bisimulation (that moreover has
finite index).  Taking the quotient of $R(\A)$ by $\sim$ it follows
that $L_{\mathrm{wrap}}$ is accepted by a finite automaton and hence
is regular.
\end{proof}

\begin{proof}[Proof of Proposition~\ref{prop:main2}]
By Proposition~\ref{prop:discrete} we have that 
\[ \left\{ \nu \in \mathbb{R}_{\geq 0} ^\clocks :
  \tuple{\loc_0,\boldsymbol{0}} \Longrightarrow^* \tuple{\loc,\nu} \right\} =
  \bigcup_{Z\in \mathcal{Z}_1(\clocks)}
  \bigcup_{\gamma\subseteq\clocks} \left\{ \upsilon + Z :
  \tuple{\loc_0,\boldsymbol{0},\{\boldsymbol{0}\},\gamma}
  \longrightarrow^* \tuple{\loc,\upsilon,Z,\emptyset} \right\} \] But by
Proposition~\ref{prop:effective} the right-hand expression above is
definable by an $\mathcal{L}$-formula in the sense of
Proposition~\ref{prop:main2}.
\end{proof}

\section{Computational Complexity}
In this section we briefly retrace the proof of Theorem~\ref{thm:main} in
order to give a complexity bound for computing the $\mathcal{L}$-formula
$\varphi_{\ell_0,\ell}$ described therein.  We sketch a proof that
$\varphi_{\ell_0,\ell}$ is an existential formula that can be computed
  in time polynomial in the number of locations of the timed automaton
  and exponential in the number of clocks and bit length of the
  maximum clock constant.

  Working backwards, we start by considering the regular language
  $L_{\mathrm{wrap}}$ in the proof of
  Proposition~\ref{prop:effective}.  From the bound
  $|\mathcal{Z}_1(\mathcal{X})| \leq (2 |\clocks|+1)!\leq 2^{O(|\clocks| \log  |\clocks| )}$ it is
  straightforward that there is a finite automaton accepting language
  $L_{\mathrm{wrap}}$ with number of states bounded by
  $\mathrm{poly}(|L|,c_{\max},2^{|\mathcal{X}| \log |\mathcal{X}|})$
  (where $L$ and $\mathcal{X}$ are as in the statement of
  Proposition~\ref{prop:effective}) and moreover this automaton can be
  computed in time polynomial in its size.

  Lin~\cite{Lin10} shows that the Parikh image of the language of an
  NFA with $m$ states and alphabet size $k$ is described by a
  quantifier-free formula of Presburger arithmetic that can be
  computed in time $2^{O(k^2 \log m)}$.  We may thus refine the
  statement of Proposition~\ref{prop:effective} by specifying that the
  subset of $\mathbb{N}^{\clocks}$ in Equation (\ref{eq:theset}) is
  described by a quantifier-free formula of Presburger arithmetic that
  can be computed in time bounded by
  $\mathrm{poly}(|L|,c_{\max},2^{|\mathcal{X}|^2})$.

  Moving now to the proof of Proposition~\ref{prop:main2}, we see that
  the formula $\psi_{\ell_0,\ell}$ can likewise be computed in time bounded
  by $\mathrm{poly}(|L|,c_{\max},2^{|\mathcal{X}|^2})$.  Constructing
  formula $\psi_{\ell_0,\ell}$ from the Presburger-arithmetic formula
  referred to in Proposition~\ref{prop:effective} requires to
  introduce existentially quantified variables to denote the
  fractional parts of the clock values of the source and target
  configurations; but $\psi_{\ell_0,\ell}$ is otherwise quantifier-free.

  Finally, recall that the formula $\varphi_{\ell_0,\ell}$ in
  Theorem~\ref{thm:main} was obtained from the formula
  $\psi_{\ell_0,\ell}$ in Proposition~\ref{prop:main2} with only a
  polynomial blow up that introduced extra existentially quantified
  variables.

\end{document}